\newtheorem{corollary}{\bf Corollary}
\newtheorem{proposition}{\bf Proposition}
\newtheorem{proof}{Proof}
\newcommand{\ubar}[1]{\underaccent{\bar}{#1}}
\begin{document}
\title{Optimal Content Placement for Offloading in Cache-enabled Heterogeneous Wireless Networks}
\author{
	\IEEEauthorblockN{{\large Dong Liu and Chenyang Yang}}\\
	\vspace{-3mm}
	\IEEEauthorblockA{Beihang University, Beijing, China\\
		Email: \{dliu, cyyang\}@buaa.edu.cn}}
	\maketitle 
\begin{abstract}
Caching at base stations (BSs) is a promising way to offload traffic and eliminate backhaul bottleneck in heterogeneous networks (HetNets). In this paper, we investigate the optimal content placement maximizing the successful offloading probability in a cache-enabled HetNet where a tier of multi-antenna macro BSs (MBSs) is overlaid with a tier of helpers with caches. Based on probabilistic caching framework, we resort to stochastic geometry theory to derive the closed-form successful offloading probability and formulate the caching  probability optimization problem, which is not concave in general. In two extreme cases with high and low user-to-helper density ratios, we obtain the optimal caching probability and analyze the impacts of BS density and transmit power of the two tiers and the signal-to-interference-plus-noise ratio (SINR) threshold. In general case, we obtain the optimal caching probability that maximizes the lower bound of successful offloading probability and analyze the impact of user density. Simulation and numerical results show that when the ratios of MBS-to-helper density, MBS-to-helper transmit power and user-to-helper density, and the SINR threshold are large, the optimal caching policy tends to cache the most popular files everywhere.
\end{abstract}
\section{Introduction}

Caching popular contents at the base stations (BSs) without backhaul connectivity, namely helpers, has been proposed as a promising way of alleviating the backhaul bottleneck and offloading the traffic in heterogeneous networks (HetNets) \cite{Niki13}.

Content placement is critical in reaping the benefit brought by caching. In wireless networks, when the coverage of several BSs overlaps,  a user is able to fetch contents from multiple helpers and hence cache-hit probability can be increased by caching different files among helpers. However, owing to interference and path loss, such a file diversity may lead to low signal-to-interference-plus-noise ratio (SINR), since a user may associate with a relative further BS to ``hit the cache" when the nearest BS does not cache the requested file \cite{Dong}.

In \cite{Niki13}, content placement was optimized to minimize the file download time where the interference among helpers are not considered. In \cite{song2015optimal}, the optimal caching policy was proposed to minimize the average bit error rate over fading channels. Both \cite{Niki13} and \cite{song2015optimal} assume \emph{a priori} known BS-user topology, which is not practical in mobile networks.  To reflect the uncertain connectivity between BS and user, more realistic network models based on stochastic geometry were considered recently. In \cite{Blaszczyszyn2015optimal}, a probabilistic caching policy was proposed where each BS caches files independently according to an optimized caching probability that maximizes the cache-hit probability. However, the optimal caching probability is not obtained with closed-form, which makes it hard to gain useful insights into the impacts of various system parameters. In \cite{cui2015analysis}, the optimal caching probability maximizing the successful transmission probability in a homogeneous network was obtained in closed-form when user density approaches infinity. In \cite{rao2015optimal}, content placement was optimized to maximize the traffic offloaded to helpers and cache-enabled users, but the links among helpers and users are assumed interference-free. In real-world HetNets, the interferences are complicated and have large impact on the system design and network performance. While deploying helpers is a cost-effective way for offloading,
how to place the contents is still not well understood.

In this paper, we consider a cache-enabled HetNet where a tier of macro BSs (MBSs) is overlaid with a tier of denser helpers with caches. We obtain the optimal probabilistic caching policy to maximize the successful offloading probability, and analyze the impact of critical system parameters.
The major differences from existing works are as follows.
\begin{itemize}
	\item Existing works rarely consider caching policy optimization in HetNets and the basic features of cache-enabled HetNets are not well captured and analyzed, such as BS and user densities, SINR threshold and association bias.
	\item Existing literatures never consider helper idling, which affects the optimization for the caching policy. Since the cost-effective helpers make dense deployment possible, helper idling is more appealing. Our analysis shows that by turning the helpers with no user to serve into idle mode, the  files should be cached more uniformly.
\end{itemize}
\section{System Model}
We consider a cache-enabled HetNet, where a tier of MBSs is overlaid with a tier of denser helper nodes. The MBSs are connected to the core network with high-capacity backhaul links, e.g., optical fibers, and the helper nodes are deployed without backhaul but equipped with caches.

%

The distribution of MBSs, helper nodes and users are modeled as three independent homogeneous PPPs  with density of $\lambda_1$, $\lambda_2$ and $\lambda_u$, denoted as $\Phi_{1}$, $\Phi_{2}$ and $\Phi_u$, respectively. Each MBS is equipped with $M_1 \geq 1$ antennas and each helper node is with $M_2 = 1$ antenna. Denote $k\in\{1,2\}$ as the index of the tier that a randomly chosen user in the network (called the typical user) is associated with. In the following, if not specified, BS refers to both MBS and helper. The transmit power at each BS  in each tier is denoted by $P_k$.

We assume that each user requests a file from a content catalog that contains $N_f$ files randomly, whose probability distribution is known \emph{a priori}. The files are indexed according to their popularity, ranking from the most popular (the $1$st file) to the least popular (the $N_f$th file). The probability of requesting the $f$th popular file follows Zipf distribution as
\begin{equation}
	p_f = {f^{-\delta}}\Big/\left({\textstyle\sum_{n=1}^{N_f} n^{-\delta}}\right) \label{eqn:pf}
\end{equation}
where the skew parameter $\delta$ is with typical value of 0.5 $\sim$ 1.0 \cite{breslau1999web}. For simplicity, we assume that the
files are with equal size\footnote{Files with different size can be divided into equal-size content chunks.} and the cache capacity of each helper node is $N_c$.

We consider probabilistic caching policy where each helper independently selects files to cache according to a specific probability distribution. To unify the analysis, denote $0 \leq q_{f, k} \leq 1$ as the probability that the BS of the $k$th tier caches the $f$th file. When $\mathbf q_{k} \triangleq [q_{f, k}]_{f= 1,\cdots, N_f}$ is given, each BS can determine which files should be cached by the method in \cite{Blaszczyszyn2015optimal}. For the MBS tier,  $\mathbf q_{1} = \mathbf 1$, since the MBS can be regarded as caching all the files due to the high-capacity backhaul.


Since every helper caches files independently, the distribution of the BSs in the $k$th tier that cache the $f$th file can be regarded as a thinning of the PPP $\Phi_k$ with probability $q_{f,k}$, which follows a PPP with density $q_{f, k}\lambda_k$ (denoted by $\Phi_{f,k}$). Similarly, the distribution of the BSs in the $k$th tier that do not cache the $f$th file follows PPP with density $(1-q_{f, k})\lambda_k$ (denoted by $\Phi_{f',k}$).

We consider user association based on both channel condition and content placement. Specifically, when the typical user requests the $f$th file, it associates with the BS from $\{\Phi_{f,k}\}_{k=1,2}$ that has the strongest average biased-received-power (BRP). The BRP for the $k$th tier is $P_{{\rm r}, k} = P_kB_k r^{-\alpha}$, where $B_k \geq 0$ is the association bias factor, $r$ is the BS-user distance, and $\alpha$ is the path-loss exponent.

We assume that $\lambda_u \gg \lambda_1$ such that each MBS has at least $M_1$ users to serve. Considering that the helpers  without backhaul can be densely deployed at low cost and the traffic may fluctuate among peak during off-peak times, the density of helper nodes may become comparable with the density of users and hence some helpers may have no users to serve. These inactive helpers will be turned into idle mode to avoid generating interference. Each MBS randomly selects $M_1$ users to serve at each time slot by zero-forcing beamforming with equal power allocation, while each helper randomly selects one user (if there is one) to serve at each time slot with full power. These assumptions define a typical scenario, which can capture the fundamental features of cache-enabled HetNets.

The downlink SINR at the typical user that requests the $f$th file and associates with the $k$th tier is
\begin{align}
{\gamma}_{f,k} & \! = \frac{\frac{P_k}{M_k} h_{k0} r_{k}^{-\alpha}}{\sum_{j=1}^{2} \!\!\big(\! \sum_{i \in \tilde{\Phi}_{f,j}\!\backslash  b_{k0}}\!\! P_jh_{ji} r_{ji}^{-\alpha}  +  \sum_{i \in \tilde{\Phi}_{f',j}} \!\! P_jh_{ji} r_{ji}^{-\alpha}  \big)\! +\! \sigma^2} \nonumber \\
& \triangleq \frac{\frac{P_k}{M_k} h_{k0} r_{k}^{-\alpha}}{\sum_{j=1}^{2} ( I_{f,kj} + I_{f',kj})  + \sigma^2 }
\triangleq \frac{\frac{P_k}{M_k} h_{k0} r_{k}^{-\alpha}}{I_{k}+ \sigma^2 }  \label{eqn:gamma}
\end{align}
where $h_{k0}$ is the equivalent channel (including small-scale fading and beamforming) from the associated BS $b_{k0}$ to the typical user, $r_k$ is the corresponding distance, $\tilde \Phi_{f, j}$ and $\tilde \Phi_{f', j}$ are respectively the sets of active BSs  in the $j$th tier that caching and not caching the $f$th file, $h_{ji}$ is the equivalent interference channel from the $i$th active BS in the $j$th tier to the typical user, $r_{ji}$ is the corresponding distance, and $\sigma^2$ is the noise power. We consider Rayleigh fading channels. Therefore, $h_{k0}$ follows exponential distribution with unit mean (i.e., $h_{ji} \sim \exp (1)$), and $h_{ij}$ follows gamma distribution with shape parameter $M_j$ and unit mean (i.e., $h_{ji} \sim \mathbb{G}(M_j, 1/M_j)$) \cite{adhoc}. The total interference $I_k$ consists of the interference from the BSs that caching the $f$th file (denoted by $I_{f, kj}$) and the interference from the BSs that do not cache the $f$th file (denoted by $I_{f', kj}$).

To reflect how many users in the network on average can be offloaded to the helper tier, we define the successful offloading probability as the probability that the typical user is associated with the helper tier and its downlink SINR is larger than a threshold $\gamma_{0}$,
which is expressed as \vspace{-1mm}
\begin{equation}
  P_{\rm off}  \triangleq \mathbb{P} ( \gamma > \gamma_{0} , k=2)  \label{eqn:def}
\end{equation}

\section{Optimal Caching Policy}
In this section, we find the optimal caching probability that maximizes the successful offloading probability, and analyze the impact of system settings on the optimal caching policy.

Since HetNets are usually interference-limited \cite{flexible}, it is reasonable to neglect the thermal noise, i.e., $\sigma^2 = 0$. For notational simplicity, we define the relative BS density, number of antennas, transmit power, and bias factor as $\lambda_{jk} \triangleq \lambda_j/\lambda_k$, $M_{jk} \triangleq M_j/M_k$, $P_{jk} \triangleq P_j/P_k$, and $B_{jk} \triangleq B_j/B_k$. Note that $\lambda_{kk} = M_{kk} = P_{kk} = B_{kk} = 1$.

\begin{proposition}
 The successful offloading probability of the typical user is \vspace{-2mm}
 \begin{equation}
 P_{\rm off}(\mathbf q_{2})  = \sum_{f=1}^{N_f} \frac{ p_f q_{f,2}}{ \mathcal{C}_{1,\gamma_0} + \mathcal{C}_{2,\gamma_0} p_{{\rm a},2} +  \mathcal C_{3,\gamma_0} p_{{\rm a},2} q_{f,2}  + q_{f,2} } \label{eqn:Poff} \vspace{-1mm}
 \end{equation}
 where $\mathcal{C}_{1,\gamma_0} \triangleq \lambda_{12}  (P_{12}B_{12})^{\frac{2}{\alpha}} {}_{2}F_1 \big[ -\frac{2}{\alpha}, M_1; 1-\frac{2}{\alpha}; -\frac{\gamma_0}{M_{12}B_{12}} \big]$, $\mathcal C_{2,\gamma_0} \triangleq  \Gamma (1-\frac{2}{\alpha}) \Gamma(M_2 + \frac{2}{\alpha_2}) \Gamma(M_2)^{-1}\gamma_0{}^{\frac{2}{\alpha}}$, $\mathcal{C}_{3,\gamma_0} \triangleq {}_{2}F_1  \big[ -\frac{2}{\alpha}, M_2; 1-\frac{2}{\alpha}; -\gamma_0 \big] - \mathcal C_{2,\gamma_0}- 1$, and \vspace{-1mm}
 \begin{equation}
 p_{a,2} \approx 1 -  \bigg(1 + \frac{\lambda_u}{3.5\lambda_2} \sum_{f=1}^{N_f}  \frac{p_f q_{f,2}}{ \lambda_{12} (P_{12} B_{12}){}^{2/\alpha}+ q_{f,2} }\bigg)^{-3.5} \hspace{-2mm}\label{eqn:pa} \vspace{-1mm}
 \end{equation} is the probability that a randomly chosen helper is active (i.e., has user associated with). $_{2}F_1[\cdot]$ and $\Gamma(\cdot)$ denote the Gauss hypergeometric function and Gamma function, respectively.
\end{proposition}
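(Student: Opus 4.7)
The plan is to decompose $P_{\rm off}$ by the requested file, which is chosen independently with probability $p_f$, giving
\[
P_{\rm off}=\sum_{f=1}^{N_f} p_f\,\mathbb{P}\bigl(\gamma_{f,2}>\gamma_0,\,k=2\,\big|\,\text{request }f\bigr).
\]
For fixed $f$ the candidate serving BSs form two independent PPPs: the tier-1 MBSs of intensity $\lambda_1$ (effectively caching every file because $\mathbf q_1=\mathbf 1$) and the tier-2 helpers in $\Phi_{f,2}$ of intensity $q_{f,2}\lambda_2$. Applying the max-BRP rule together with the void probabilities of each PPP gives a closed-form joint density for the event $\{k=2\}$ and the serving distance $r_2$, in the standard two-tier cell-selection form.

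Conditional on $r_2$, the coverage probability equals the Laplace transform of the aggregate interference $I_2$ evaluated at $s=\gamma_0 M_2 r_2^{\alpha}/P_2$, since $h_{20}\sim\exp(1)$. I would split $I_2$ into three independent shot-noise components and apply the PGFL to each: (i) tier-1 MBSs outside the BRP-equivalent exclusion disk, with $\mathbb{G}(M_1,1/M_1)$ per-interferer gains from ZFBF and no active thinning because $\lambda_u\gg\lambda_1$; (ii) helpers in $\Phi_{f,2}$ outside the ball of radius $r_2$, independently active with probability $p_{a,2}$ and with exponential gains; (iii) helpers in $\Phi_{f',2}$ of intensity $(1-q_{f,2})\lambda_2$ with no exclusion region, again thinned by $p_{a,2}$. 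Each radial integral closes in terms of ${}_2F_1$ and $\Gamma(\cdot)$, producing the constants $\mathcal C_{1,\gamma_0}$, $\mathcal C_{2,\gamma_0}$, $\mathcal C_{3,\gamma_0}$. Integrating the coverage expression against the serving-distance density cancels the $r_2$-dependence and yields each summand of \eqref{eqn:Poff}.

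For $p_{a,2}$ I would invoke the standard approximation to the load distribution in a Poisson cellular network: a randomly chosen tier-2 helper is non-empty with probability $\approx 1-(1+\bar\rho/3.5)^{-3.5}$, where $\bar\rho$ is the mean number of users it serves. Computing $\bar\rho=(\lambda_u/\lambda_2)\sum_f p_f\pi_{f,2}$, where $\pi_{f,2}$ is the probability that a user requesting file $f$ selects tier 2 obtained from the BRP comparison alone (no SINR), produces \eqref{eqn:pa}. The main obstacle is the careful bookkeeping of the two helper-interference components: one has intensity $p_{a,2}q_{f,2}\lambda_2$ with exclusion disk of radius $r_2$, the other $p_{a,2}(1-q_{f,2})\lambda_2$ with no exclusion, and the two PGFLs must be combined on the shared base intensity $p_{a,2}\lambda_2$ while producing the $q_{f,2}$-linear and $q_{f,2}$-free denominator terms separately. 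A secondary subtlety is justifying that the active/idle mark of each helper can be treated as an independent thinning of the interference field that is independent of the serving distance, which is the approximation underlying \eqref{eqn:pa}.
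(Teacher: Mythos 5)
Your proposal follows essentially the same route as the paper's proof: the same decomposition over the requested file with the standard two-tier BRP association probability and serving-distance density, the same split of the interference into the tier-1 component (with BRP exclusion disk, gamma gains, no idling), the $\Phi_{f,2}$ component thinned by $p_{{\rm a},2}$ with exclusion radius $r_2$, and the $\Phi_{f',2}$ component with vanishing exclusion radius, each evaluated by the PGFL to produce $\mathcal C_{1,\gamma_0}$, $\mathcal C_{2,\gamma_0}$, $\mathcal C_{3,\gamma_0}$, and the same $3.5$-parameter load approximation with the BRP-only association probability for $p_{{\rm a},2}$. The approach and all key approximations (independent activity thinning, $p_{{\rm a},1}=1$) match the paper's Appendix A.
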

\begin{proof}
	See Appendix A.
\end{proof}

Then, the optimal caching probability that maximizes the successful offloading probability can be found from
\begin{subequations}
\begin{align}
\text{\em Problem 1:}\quad \max_{\mathbf q_2}~ & P_{\rm off}(\mathbf q_{2})  \nonumber \\ 
 s.t.~ & \sum_{i=1}^{N_f} q_{f,2} \leq N_c  \label{eqn:con1}  \\
& 0\leq q_{f,2} \leq 1, ~ f = 1, \cdots, N_f \label{eqn:con2}
\end{align}
\end{subequations}
where \eqref{eqn:con1} is equivalent to the cache capacity constraint (i.e., the number of cached file cannot exceed the cache capacity)  as proved in \cite{Blaszczyszyn2015optimal}, and \eqref{eqn:con2} is the probability constraint.

This problem is not concave in general, because the active probability of helper $p_{{\rm a}, 2}$ in the objective function is a complicated function of $q_{f,2}$ as shown in \eqref{eqn:pa}, which makes the global optimal solution hard to obtain.

To gain useful insights into the property of the optimal caching probability and the impact of various system settings, we first study two extreme cases where $\lambda_u / \lambda_2 \to \infty$ and $\lambda_u / \lambda_2 \to 0$, and then solve Problem 1 in general case.
\subsubsection{$\lambda_u / \lambda_2 \to \infty$}
In this case, $p_{{\rm a}, 2} \to 1$, i.e., all the helpers are active. The successful offloading probability becomes \vspace{-1mm}
\begin{equation}
 P_{\rm off}^{\infty}(\mathbf q_{2})  = \sum_{f=1}^{N_f} \frac{ p_f q_{f,2}}{ \mathcal{C}_{1,\gamma_0} + \mathcal{C}_{2,\gamma_0} + (\mathcal C_{3,\gamma_0} + 1)q_{f,2} } \label{eqn:case1}   \vspace{-1mm}
\end{equation}

It can be easily proved that the Hessian matrix of $P_{\rm off}^{\infty}(\mathbf q_2)$ is negative definite. Further considering that constraints \eqref{eqn:con1} and \eqref{eqn:con2} are linear, Problem 1 is concave when $\lambda_u/\lambda_2 \to \infty$. Then, from the Karush-Kuhn-Tucker (KKT) condition, we obtain the following proposition.
\begin{proposition}
When $\lambda_u / \lambda_2 \to \infty$, the optimal caching probability is
		\begin{equation}
		q_{f,2}^* \!= \! \left[ \frac{\sqrt{\mathcal{C}_{1,\gamma_0} + \mathcal{C}_{2,\gamma_0}} }{\sqrt{\nu} (\mathcal C_{3,\gamma_0} + 1)} \sqrt{p_f}-\frac{\mathcal{C}_{1,\gamma_0} + \mathcal{C}_{2,\gamma_0}}{\mathcal C_{3,\gamma_0} + 1} \right]_0^1 \label{eqn:opt}
		\end{equation}
		where $[x]_0^1 = \max\{\min\{x,1\},0\}$ denoting that $x$ is truncated by $0$ and $1$, and the Lagrange multiplier $\nu$ satisfying $ \sum_{f=1}^{N_f}q_{f,2}^* = N_c$ can be found by bisection searching.
\end{proposition}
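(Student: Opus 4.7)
The plan is to exploit the fact that, in the regime $\lambda_u/\lambda_2 \to \infty$, the active probability $p_{{\rm a},2}$ drops out of \eqref{eqn:Poff}, reducing Problem~1 to the maximization of $P_{\rm off}^{\infty}(\mathbf q_2)$ in \eqref{eqn:case1} over the polyhedron defined by \eqref{eqn:con1}--\eqref{eqn:con2}. Abbreviating $a \triangleq \mathcal C_{1,\gamma_0} + \mathcal C_{2,\gamma_0}$ and $b \triangleq \mathcal C_{3,\gamma_0} + 1$, the objective is a separable sum of terms $p_f q_{f,2}/(a + b q_{f,2})$ whose second derivatives equal $-2 p_f a b/(a+b q_{f,2})^3 < 0$. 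Hence the Hessian is diagonal and negative definite, as the text already notes, and since Slater's condition holds for the linear constraints, KKT is both necessary and sufficient for global optimality.

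Next I would form the Lagrangian
\begin{equation*}
\mathcal L = P_{\rm off}^{\infty}(\mathbf q_2) - \nu\Bigl(\sum_{f=1}^{N_f} q_{f,2} - N_c\Bigr) + \sum_{f=1}^{N_f}\eta_f q_{f,2} - \sum_{f=1}^{N_f}\mu_f (q_{f,2} - 1),
\end{equation*}
with $\nu,\eta_f,\mu_f \geq 0$, and impose stationarity
\begin{equation*}
\frac{p_f a}{(a + b q_{f,2})^2} = \nu - \eta_f + \mu_f .
\end{equation*}
For indices with an interior optimum $0 < q_{f,2}^* < 1$, complementary slackness forces $\eta_f = \mu_f = 0$, and solving the stationarity equation for $q_{f,2}^*$ returns exactly the unclipped expression $\sqrt{a p_f/\nu}/b - a/b$ appearing in \eqref{eqn:opt}. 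When that expression is negative the lower box constraint binds, giving $q_{f,2}^* = 0$ with $\eta_f = \nu - p_f/a \geq 0$; when it exceeds one the upper constraint binds, giving $q_{f,2}^* = 1$ with $\mu_f = p_f a/(a+b)^2 - \nu \geq 0$. The three cases fuse exactly into the truncation $[\,\cdot\,]_0^1$ in \eqref{eqn:opt}.

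Finally, the multiplier $\nu$ is pinned down by the cache-capacity constraint. In the nontrivial regime $N_c < N_f$ this constraint must bind at the optimum (otherwise raising some $q_{f,2}^* < 1$ would strictly increase the objective), so $\sum_{f=1}^{N_f} q_{f,2}^*(\nu) = N_c$. Because the unclipped formula is strictly decreasing in $\nu$, the clipped sum is a non-increasing, continuous, piecewise-smooth function of $\nu$ that decays from $N_f$ at $\nu \to 0^+$ to $0$ at $\nu \to \infty$, so bisection locates its unique root $\nu^*$. The only mildly delicate point is verifying that the capacity constraint is active at the optimum; the remainder is routine box-constrained KKT bookkeeping, so I do not anticipate any real obstacle.
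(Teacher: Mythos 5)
Your proposal is correct and follows essentially the same route as the paper: the paper also observes that in the limit $\lambda_u/\lambda_2 \to \infty$ the objective reduces to \eqref{eqn:case1} with a negative-definite (diagonal) Hessian, and then invokes the KKT conditions of the resulting concave program to obtain \eqref{eqn:opt}. Your write-up simply makes explicit the Lagrangian, the stationarity/complementary-slackness case analysis yielding the $[\,\cdot\,]_0^1$ truncation, and the monotonicity in $\nu$ that justifies bisection, all of which the paper leaves implicit.
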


As shown in \eqref{eqn:opt}, the optimal caching probability $q_{f,2}^*$ is non-increasing with $f$ since $p_f$ decreases with $f$, which coincides with the intuition that the file with higher popularity should be cached with higher probability.
For $q_{f,2}^* \in (0,1)$, considering  \eqref{eqn:pf}, the relation between caching probability and file popularity rank in \eqref{eqn:opt} obeys a shifted power law with exponent $-\delta/2$, which is very different from the noise-limited scenario considered in \cite{rao2015optimal}.

To show how BS density, transmit power of different tiers and SINR threshold  affect the optimal caching policy, we derive the following corollaries based on Proposition 2.
\begin{corollary}
For any $q_{f,2}^*$, $q_{f+1,2}^* \in (0,1)$, $q_{f,2}^* - q_{f+1,2}^*$ increases with $\lambda_{12}$ and $P_{12}$.
\end{corollary}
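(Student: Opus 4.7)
By hypothesis both $q_{f,2}^*$ and $q_{f+1,2}^*$ lie strictly inside $(0,1)$, so the truncation in \eqref{eqn:opt} is inactive at both indices and
\begin{equation*}
q_{f,2}^* - q_{f+1,2}^* = \frac{\sqrt{\mathcal{C}_{1,\gamma_0}+\mathcal{C}_{2,\gamma_0}}}{\sqrt{\nu}\,(\mathcal{C}_{3,\gamma_0}+1)}\,(\sqrt{p_f}-\sqrt{p_{f+1}}).
\end{equation*}
Since $\sqrt{p_f}-\sqrt{p_{f+1}}>0$ by \eqref{eqn:pf} and is independent of $\lambda_{12}$ and $P_{12}$, the claim reduces to showing that the prefactor $b \triangleq \sqrt{\mathcal{C}_{1,\gamma_0}+\mathcal{C}_{2,\gamma_0}}/[\sqrt{\nu}\,(\mathcal{C}_{3,\gamma_0}+1)]$ is strictly increasing in each of $\lambda_{12}$ and $P_{12}$.

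Next I would locate where the two parameters enter. The constants $\mathcal{C}_{2,\gamma_0}$ and $\mathcal{C}_{3,\gamma_0}$ do not depend on $\lambda_{12}$ or $P_{12}$, whereas $\mathcal{C}_{1,\gamma_0} = \lambda_{12}(P_{12}B_{12})^{2/\alpha}\,{}_2F_1[-\tfrac{2}{\alpha},M_1;1-\tfrac{2}{\alpha};-\gamma_0/(M_{12}B_{12})]$ contains them only through the leading $\lambda_{12}$ and $P_{12}^{2/\alpha}$; the hypergeometric factor is a strictly positive constant in these two parameters, as it represents a Laplace-transform-type quantity arising in the derivation of Proposition~1. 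Hence $\mathcal{C}_{1,\gamma_0}$, and therefore $a\triangleq(\mathcal{C}_{1,\gamma_0}+\mathcal{C}_{2,\gamma_0})/(\mathcal{C}_{3,\gamma_0}+1)$, is strictly increasing in each of $\lambda_{12}$ and $P_{12}$. The problem reduces to proving that $b$ is a strictly increasing function of $a$.

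To couple $b$ to $a$, I would invoke that the cache-capacity constraint \eqref{eqn:con1} is active at the optimum: the partial derivative of any summand of $P_{\rm off}^\infty$ in \eqref{eqn:case1} with respect to $q_{f,2}$ equals $p_f(\mathcal{C}_{1,\gamma_0}+\mathcal{C}_{2,\gamma_0})/[\mathcal{C}_{1,\gamma_0}+\mathcal{C}_{2,\gamma_0}+(\mathcal{C}_{3,\gamma_0}+1)q_{f,2}]^2>0$, so any slack could be removed by pushing some interior $q_{f,2}^*$ upward. Substituting the interior form of \eqref{eqn:opt} into this binding constraint yields
\begin{equation*}
\sum_{f=1}^{N_f}\,[\,b\sqrt{p_f}-a\,]_0^1 \;=\; N_c,
\end{equation*}
whose left-hand side is non-decreasing in $b$, non-increasing in $a$, and strictly so on any index currently in the open interval $(0,1)$. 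A standard monotone-comparative-statics argument then finishes: if $a$ were to rise while $b$ stayed put or fell, every summand would weakly decrease and the summand at index $f$ (interior by assumption) would strictly decrease, violating the equality. Hence $b$ strictly increases with $a$, which chains back to the original statement.

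The principal obstacle I anticipate is the implicit dependence of the Lagrange multiplier $\nu$ on $(\lambda_{12},P_{12})$, compounded by the fact that the active set $\{f:q_{f,2}^*\in(0,1)\}$ can reorganise as parameters vary. The comparative-statics route above sidesteps both difficulties because it only needs the binding constraint and the presence of a single interior summand, both of which are supplied by the hypotheses of the corollary.
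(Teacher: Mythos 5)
Your proof is correct and shares the paper's skeleton: both reduce $q_{f,2}^*-q_{f+1,2}^*$ to a common prefactor times $\sqrt{p_f}-\sqrt{p_{f+1}}$ (the paper's Eq.~\eqref{eqn:diff} writes $p_f-p_{f+1}$, but from \eqref{eqn:opt} it should indeed be the square-root difference as you have it) and both pin that prefactor down through the binding constraint $\sum_f q_{f,2}^*=N_c$. Where you diverge is the final monotonicity step. The paper fixes the saturation sets ($N_1$ files at $1$, $N_0$ files at $0$), solves the binding constraint explicitly for $\frac{1}{c_2}\sqrt{c_1/\nu}$ in \eqref{eqn:k}, and reads off that it increases with $c_1=\mathcal{C}_{1,\gamma_0}+\mathcal{C}_{2,\gamma_0}$; this implicitly assumes $N_0$ and $N_1$ do not change as $\lambda_{12}$ or $P_{12}$ varies. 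Your monotone-comparative-statics argument on $\sum_f[b\sqrt{p_f}-a]_0^1=N_c$ avoids that assumption entirely and is therefore slightly more robust, at the cost of being less explicit (the paper's closed form \eqref{eqn:k} is reused in Appendix D for Corollary~3). Two small points you share with the paper rather than resolve: you need $c_2=\mathcal{C}_{3,\gamma_0}+1>0$ strictly (the paper's footnote only gives $c_2\ge 0$) for $a$ and $b$ to be well defined and for $a$ to inherit the monotonicity of $\mathcal{C}_{1,\gamma_0}$; and the conclusion requires indices $f$ and $f+1$ to remain interior over the parameter range compared, which neither proof makes explicit.
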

\begin{proof}
	See Appendix B.
\end{proof}

Corollary 1 indicates that when $\lambda_1/\lambda_2$ or $P_1/P_2$ increases, the files with higher popularity have more chances to be cached while the files with lower popularity have less chances to be cached, leading to a trend towards caching the most popular files everywhere. By contrast, when $\lambda_1/\lambda_2$ or $P_1/P_2$ decreases, the files with higher popularity have less chances to be cached while the files with lower popularity have more chances to be cached, i.e., the diversity of cached files increases.

\begin{corollary}
When $\gamma_0 \to \infty$, the optimal caching probability is $q_{1,2}^*, \cdots, q_{N_c,2}^* = 1$ and $q_{N_c+1,2}^*, \cdots, q_{N_f, 2}^* = 0$.
\end{corollary}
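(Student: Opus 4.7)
The plan is to show that as $\gamma_{0}\to\infty$ the denominator in \eqref{eqn:case1} is asymptotically $\mathcal{C}_{1,\gamma_{0}}+\mathcal{C}_{2,\gamma_{0}}$, independent of $q_{f,2}$; the objective $P_{\rm off}^{\infty}(\mathbf{q}_{2})$ then degenerates to a linear function of $\mathbf{q}_{2}$ and Problem~1 reduces to a knapsack-type linear program whose greedy solution caches the top-$N_{c}$ most popular files.

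I would first analyse the orders of the three constants via the standard large-argument expansion
${}_{2}F_{1}(a,b;c;-z)=\frac{\Gamma(c)\Gamma(b-a)}{\Gamma(b)\Gamma(c-a)}z^{-a}[1+O(z^{-1})]+\frac{\Gamma(c)\Gamma(a-b)}{\Gamma(a)\Gamma(c-b)}z^{-b}[1+O(z^{-1})]$
applied with $a=-2/\alpha$ (valid since $\alpha>2$). This immediately yields $\mathcal{C}_{1,\gamma_{0}}=\Theta(\gamma_{0}^{2/\alpha})\to\infty$ and $\mathcal{C}_{2,\gamma_{0}}=\Theta(\gamma_{0}^{2/\alpha})\to\infty$. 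Crucially, the leading $\gamma_{0}^{2/\alpha}$-coefficient of ${}_{2}F_{1}[-2/\alpha,M_{2};1-2/\alpha;-\gamma_{0}]$ is $\Gamma(1-2/\alpha)\Gamma(M_{2}+2/\alpha)/\Gamma(M_{2})$, i.e., exactly the prefactor of $\mathcal{C}_{2,\gamma_{0}}$. These two leading terms therefore cancel in the defining subtraction of $\mathcal{C}_{3,\gamma_{0}}$, so $\mathcal{C}_{3,\gamma_{0}}+1\to 0$; in particular $\mathcal{C}_{3,\gamma_{0}}+1=o(\mathcal{C}_{1,\gamma_{0}}+\mathcal{C}_{2,\gamma_{0}})$.

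Since $q_{f,2}\in[0,1]$, the term $(\mathcal{C}_{3,\gamma_{0}}+1)q_{f,2}$ in the denominator of \eqref{eqn:case1} is $o(\mathcal{C}_{1,\gamma_{0}}+\mathcal{C}_{2,\gamma_{0}})$ uniformly in $f$, so
$P_{\rm off}^{\infty}(\mathbf{q}_{2})=(1+o(1))(\mathcal{C}_{1,\gamma_{0}}+\mathcal{C}_{2,\gamma_{0}})^{-1}\sum_{f=1}^{N_{f}}p_{f}q_{f,2}$.
With an $f$-independent positive denominator, maximising $P_{\rm off}^{\infty}$ is asymptotically equivalent to the linear program $\max_{\mathbf{q}_{2}}\sum_{f=1}^{N_{f}}p_{f}q_{f,2}$ subject to \eqref{eqn:con1} and \eqref{eqn:con2}. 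By a one-line exchange argument using the strictly decreasing Zipf weights $p_{1}>p_{2}>\cdots>p_{N_{f}}$, or equivalently by the KKT conditions with the budget tight, the unique maximiser is $q_{f,2}^{*}=1$ for $f=1,\dots,N_{c}$ and $q_{f,2}^{*}=0$ for $f=N_{c}+1,\dots,N_{f}$, which is the claimed solution.

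The main obstacle is the cancellation in Step~1: everything hinges on $\mathcal{C}_{3,\gamma_{0}}+1$ vanishing (or at least being of strictly lower order) relative to $\mathcal{C}_{1,\gamma_{0}}+\mathcal{C}_{2,\gamma_{0}}$. A clean way to verify this is to apply Euler's transformation ${}_{2}F_{1}(a,b;c;z)=(1-z)^{-a}{}_{2}F_{1}(a,c-b;c;z/(z-1))$ at $z=-\gamma_{0}$, which sends $z/(z-1)\to 1^{-}$, and then invoke Gauss's summation theorem ${}_{2}F_{1}(a,b;c;1)=\Gamma(c)\Gamma(c-a-b)/[\Gamma(c-a)\Gamma(c-b)]$ together with a short expansion around the boundary point $1$. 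Once this asymptotic is in hand, the remaining LP step is routine.
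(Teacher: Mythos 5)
Your proposal is correct and follows essentially the same route as the paper's proof: both hinge on showing that the large-$\gamma_{0}$ asymptotic of ${}_{2}F_{1}\big[-\tfrac{2}{\alpha},M_{2};1-\tfrac{2}{\alpha};-\gamma_{0}\big]$ coincides with $\mathcal{C}_{2,\gamma_{0}}$, so that $\mathcal{C}_{3,\gamma_{0}}+1\to 0$, the denominator in \eqref{eqn:case1} becomes independent of $q_{f,2}$, and the problem degenerates to the linear program $\max\sum_{f}p_{f}q_{f,2}$ solved by caching the top $N_{c}$ files. Your version is if anything slightly more careful than the paper's (you track that $(\mathcal{C}_{3,\gamma_{0}}+1)q_{f,2}$ is of strictly lower order than the diverging terms $\mathcal{C}_{1,\gamma_{0}}+\mathcal{C}_{2,\gamma_{0}}$, rather than just substituting $\mathcal{C}_{3,\gamma_{0}}=-1$), but the key cancellation and the concluding exchange argument are the same.
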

\begin{proof}
	See Appendix C.
\end{proof}

Corollary 2 indicates that when the SINR requirement $\gamma_0$ is high, the optimal caching policy is simply caching the most popular files everywhere.

\subsubsection{$\lambda_u/\lambda_2 \to 0$}
In this case, we have $p_{{\rm a}, 2} \to 0$.  Similar to deriving \eqref{eqn:opt}, we can obtain the optimal caching probability as
$
	q_{f2}^* =  \big[ \sqrt{\frac{ \mathcal{C}_{1,\gamma_0} }{\nu }} \sqrt{p_f}-\mathcal{C}_{1,\gamma_0}\big]_0^1 $,
	where $\nu$ satisfying $ \sum_{f=1}^{N_f}q_{f,2}^* = N_c$ can be found by bisection searching.

We can further prove that the conclusions in Corollary 1 and Corollary 2 also hold for $\lambda_u/\lambda_2 \to 0$, which are omitted  due to space limitation.

\subsubsection{General Case}
When $\lambda_u$ is comparable with $\lambda_2$, Problem 1 is not concave. Only a local optimal solution can be found, say by using inter-point method \cite{boyd2004convex}, which is of high complexity when the dimension of optimization variable $\mathbf q_2$, i.e., $N_f$, is large. Recall that it is the complicated expression of $p_{{\rm a}, 2}$ as  function of $q_{f,2}$ that makes the problem hard to solve. In the following, we first introduce a $q_{f,2}$-independent upper bound for $p_{{\rm a},2}$ (denoted by $\bar p_{{\rm a}, 2}$), which yields a lower bound of $P_{\rm off}$ (denoted by $\ubar{P}_{\rm off}$) because $P_{\rm off}$ decreases with $p_{{\rm a}, 2}$ as shown in \eqref{eqn:Poff}. Then, we solve the problem maximizing $\ubar{P}_{\rm off}$.

To obtain an upper bound of $p_{{\rm a},2}$ that does not depend on $q_{f,2}$, we first find the optimal caching probability that maximizes $p_{{\rm a},2}$, denoted as $q_{f,2}^o$. Then,  for any $q_{f,2}$ satisfying \eqref{eqn:con1} and \eqref{eqn:con2}, $ p_{{\rm a}, 2} \leq \bar p_{{\rm a}, 2}$.

From \eqref{eqn:pa}, we can see that maximizing $p_{{\rm a},2}$ is equivalent to maximizing $\sum_{f=1}^{N_f}p_f q_{f,2}  (\lambda_{12} (P_{12} B_{12}){}^{2/\alpha}+ q_{f,2} )^{-1}$, which has the same function structure as \eqref{eqn:case1}. Then, by using similar way to derive \eqref{eqn:opt}, we can obtain
$
q_{f,2}^o = \big[ \sqrt{\tfrac{\lambda_{12} (P_{12} B_{12})^{{2}/{\alpha}}}{\mu}} \sqrt{p_f} - \lambda_{12} (P_{12} B_{12})^{\frac{2}{\alpha}}  \big]_0^1
$,
where $\mu$ satisfying $\sum_{f=1}^{N_f}q_{f,2}^o = N_c$ can be found by bisection searching.

By substituting $q_{f,2}^o$ into \eqref{eqn:pa}, we can obtain $\bar{p}_{{\rm a}, 2}$. Then, by substituting $\bar{p}_{{\rm a}, 2}$ into \eqref{eqn:Poff}, we can obtain $\ubar{P}_{\rm off}$ as \vspace{-0.5mm}
\begin{equation}
\ubar{P}_{\rm off} =    \sum_{f=1}^{N_f} \frac{ p_f q_{f,2}}{ \mathcal{C}_{1,\gamma_0} + \bar{p}_{{\rm a},2}  \mathcal{C}_{2, \gamma_0} +  (\bar{p}_{{\rm a},2} \mathcal C_{3,\gamma_0} + 1) q_{f,2} }  \label{eqn:Pofflower} 
\end{equation}
Since $\bar p_{{\rm a}, 2}$ does not depend on $q_{f,2}$,  \eqref{eqn:Pofflower} has the same function structure as \eqref{eqn:case1}. Again, similar to deriving \eqref{eqn:opt}, the optimal probability that maximizes the lower bound $\ubar{P}_{\rm off}$ under the constraints \eqref{eqn:con1} and \eqref{eqn:con2} can be obtained as
	\begin{equation}
	\underline{q}_{f,2}^* \! = \!
	\left[ \frac{\sqrt{ \mathcal{C}_{1,\gamma_0} +  \bar{p}_{{\rm a},2} \mathcal{C}_{2,\gamma_0}} }{\sqrt{\nu} ( \bar{p}_{{\rm a},2} C_{3,\gamma_0} + 1)} \sqrt{p_f}-\frac{\mathcal{C}_{1,\gamma_0} +  \bar{p}_{{\rm a},2} \mathcal{C}_{2,\gamma_0}}{ \bar{p}_{{\rm a},2} C_{3, \gamma_0} + 1} \right]_0^1 \label{eqn:optlower}
	\end{equation}
where $\nu$ satisfying $\sum_{f=1}^{N_f}\ubar{q}_{f,2}^* = N_c$ can be found by bisection searching. In the next section, we show that the caching probability $\ubar{q}_{f,2}^*$ can achieve almost the same successful offloading probability as  ${q}_{f,2}^*$ found by inter-point method. Since the computation of $\ubar{q}_{f,2}^*$ only requires twice bisection searches on two scalars, i.e., $\mu$ and $\nu$, it can be obtained with much lower complexity than the inter-point method when $N_f$ is large.

With the explicit structure of $\ubar{q}_{f,2}^*$, we can analyze the impact of user density on the optimal caching policy.

\begin{corollary}
For any $\ubar{q}_{f,2}^*$, $\ubar{q}_{f+1,2}^* \in (0,1)$, $\ubar{q}_{f,2}^* - \ubar{q}_{f+1,2}^*$ increases with $\lambda_u/\lambda_2$.
\end{corollary}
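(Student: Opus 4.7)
The plan is to reduce the corollary to the monotonicity of a single scalar function of $\lambda_u/\lambda_2$. Since both $\ubar{q}_{f,2}^*$ and $\ubar{q}_{f+1,2}^*$ lie strictly inside $(0,1)$, the outer $[\cdot]_0^1$ in \eqref{eqn:optlower} is inactive at these two indices, so the difference collapses to $A\bigl(\sqrt{p_f}-\sqrt{p_{f+1}}\bigr)$, where $A$ denotes the coefficient of $\sqrt{p_f}$ in \eqref{eqn:optlower}. The second factor is a strictly positive constant fixed by the Zipf distribution, so it suffices to prove that $A$ is strictly increasing in $\lambda_u/\lambda_2$.

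Next, I would observe that $\lambda_u/\lambda_2$ enters $A$ only through $\bar{p}_{{\rm a},2}$, which lets me chain the monotonicity through $x\triangleq\bar{p}_{{\rm a},2}$. The first link, that $\bar{p}_{{\rm a},2}$ strictly increases in $\lambda_u/\lambda_2$, is immediate by inspection of \eqref{eqn:pa}: the optimal $q_{f,2}^o$ used to define $\bar{p}_{{\rm a},2}$ does not depend on $\lambda_u/\lambda_2$, so the bracketed sum is a fixed positive constant, and the outer function $1-(1+c\,\lambda_u/\lambda_2)^{-3.5}$ is increasing. For the second link, I would eliminate the Lagrange multiplier $\nu$ via the binding cache-capacity constraint $\sum_f \ubar{q}_{f,2}^* = N_c$. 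Partitioning the files into an interior set $\mathcal{F}$, an upper-saturated set $\mathcal{F}_1$ and a lower-saturated set $\mathcal{F}_0$, and summing \eqref{eqn:optlower}, I obtain $A = \bigl(N_c - |\mathcal{F}_1| + |\mathcal{F}|\,B(x)\bigr)/\sum_{f\in\mathcal{F}}\sqrt{p_f}$ with $B(x) \triangleq (\mathcal{C}_{1,\gamma_0}+x\mathcal{C}_{2,\gamma_0})/(x\mathcal{C}_{3,\gamma_0}+1)$. On every interval where the partition is fixed, $A$ inherits monotonicity from $B(x)$; continuity of $\ubar{q}_{f,2}^*$ in $x$ forces $A$ to agree on both sides of any partition breakpoint (a straightforward check by substituting the boundary condition $A\sqrt{p_{f_0}}-B=1$ or $=0$ into the pre- and post-transition forms of $A$), which lifts piecewise monotonicity to global monotonicity.

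The main obstacle will be verifying that $B(x)$ is strictly increasing. A direct derivative gives $dB/dx = (\mathcal{C}_{2,\gamma_0}-\mathcal{C}_{1,\gamma_0}\mathcal{C}_{3,\gamma_0})/(x\mathcal{C}_{3,\gamma_0}+1)^2$, so the task reduces to establishing the sign of the numerator. The key ingredient I expect to use is that $\mathcal{C}_{3,\gamma_0}\le 0$: combining the Andrews--Baccelli integral form of ${}_2F_1[-2/\alpha, 1; 1-2/\alpha; -\gamma_0]$ with the identity $\Gamma(1-2/\alpha)\Gamma(1+2/\alpha)=\int_0^\infty du/(1+u^{\alpha/2})$ collapses $\mathcal{C}_{3,\gamma_0}$ to $-\gamma_0^{2/\alpha}\int_0^{\gamma_0^{-2/\alpha}}du/(1+u^{\alpha/2})\le 0$. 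Since $\mathcal{C}_{1,\gamma_0}>0$ and $\mathcal{C}_{2,\gamma_0}>0$, this gives $\mathcal{C}_{2,\gamma_0}-\mathcal{C}_{1,\gamma_0}\mathcal{C}_{3,\gamma_0}\ge \mathcal{C}_{2,\gamma_0}>0$, which closes the chain and proves the claim.
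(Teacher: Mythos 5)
Your proof is correct and follows essentially the same route as the paper's Appendix D: eliminate $\nu$ via the binding constraint $\sum_{f}\ubar{q}_{f,2}^* = N_c$ to express the coefficient of $\sqrt{p_f}$ in terms of $\bar c_1/\bar c_2 = (\mathcal{C}_{1,\gamma_0}+\bar p_{{\rm a},2}\mathcal{C}_{2,\gamma_0})/(\bar p_{{\rm a},2}\mathcal{C}_{3,\gamma_0}+1)$, show that this ratio increases in $\bar p_{{\rm a},2}$ using $\mathcal{C}_{2,\gamma_0}\ge 0$ and $\mathcal{C}_{3,\gamma_0}\le 0$, and chain through the monotonicity of $\bar p_{{\rm a},2}$ in $\lambda_u/\lambda_2$. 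Your two additions --- the continuity check across partition breakpoints and the integral-representation proof that $\mathcal{C}_{3,\gamma_0}\le 0$ --- supply rigor the paper only asserts (the latter in a footnote), but they do not change the approach.
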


\begin{proof}
	See Appendix D.
\end{proof}

Corollary 3 indicates that when the ratio of user-to-helper density increases, the files with higher popularity have more chances to be cached and \emph{vice versa}, which reflects a trend towards caching the most popular files everywhere. On the contrary, when the ratio reduces, say from $\lambda_u/\lambda_2 \to \infty$ (implies no BS idling) to finite values (implies with BS idling), $\ubar{q}_{f,2}^* - \ubar{q}_{f+1,2}^*$ decreases. This indicates that BS idling makes caching files more uniformly among the helpers optimal.
\section{Simulation and Numerical Results}
In this section, we validate our analysis via simulation and illustrate how different factors affect the optimal caching policy and corresponding successful offloading probability via numerical results. The following caching policies are considered for comparison,
\begin{enumerate}
	\item {\em Opt. (Inter-point)}: the local optimal solution of Problem 1 found by inter-point method.
	\item {\em Opt. (Lower-bound)}: the optimal solution that maximizes the lower bound $\ubar{P}_{\rm off}$ in \eqref{eqn:Pofflower}, i.e., $\ubar q_{f,2}^*$.
	\item {\em Popular}: caching the most popular files everywhere, i.e., $q_{1,2}, \cdots, q_{N_c,2} = 1$ and $q_{N_c+1,2}, \cdots, q_{N_f, 2} = 0$. This policy achieves no file diversity.
	\item {\em Uniform}: each file is cached with equal caching probability, i.e., $q_{f,2} = N_c/N_f$. This policy achieves the maximal file diversity.
\end{enumerate}

Unless otherwise specified, the following setting is used. The MBS, helper, and user densities are $\lambda_1 = 1/(250^2 \pi)$ m$^{-2}$, $\lambda_2 = 25/(250^2 \pi)$ m$^{-2}$ and $\lambda_u = 25/(250^2 \pi)$ m$^{-2}$, respectively. The path-loss exponent is $\alpha = 3.7$. The transmit power of each MBS (with $M_1 = 4$ antennas) and helper are $P_1 = 46$ dBm and $P_2 = 21$ dBm, respectively \cite{3GPP}. The bias factors for the MBS and helper tiers are $B_1 = 1$ and $B_2 = 10$ dB, respectively. The file catalog size is $N_f = 1000$ files and the SINR threshold is $\gamma_0 = -10$ dB.

\begin{figure}[!htb]
	\centering 		\vspace{-3mm}
	\includegraphics[width = 0.4\textwidth]{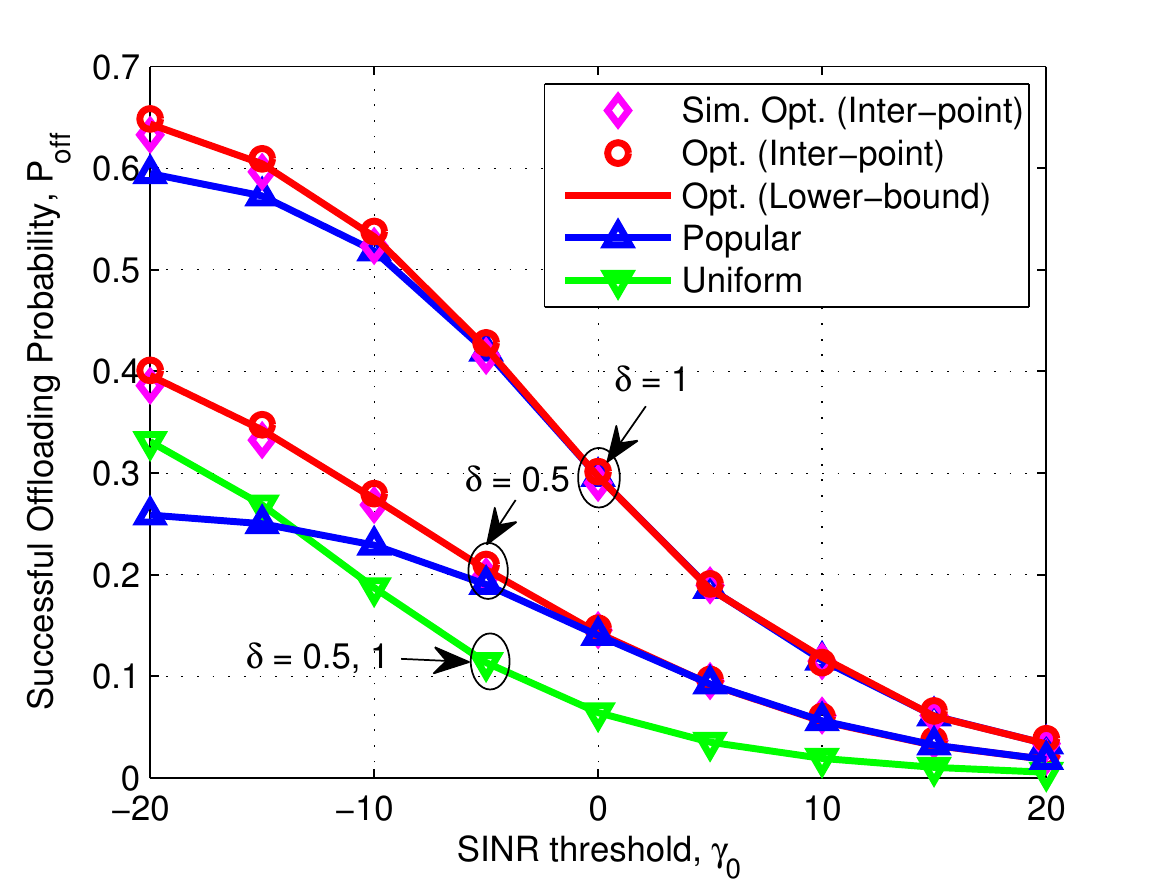}
	\vspace{-3mm}
	\caption{Impact of SINR threshold on successful offloading probability.}
	\label{fig:SINR}
		\vspace{-1mm}
\end{figure}

In Fig. \ref{fig:SINR}, we show the simulation and numerical results of successful offloading probability versus the SINR threshold. The simulation result is obtained based on $q_{f,2}^*$ from the inter-point method and then by computing $P_{\rm off} (\mathbf q_2^*)$ via Monte Carlo method considering $-95$ dBm noise power. The numerical results are computed from Proposition 1. We can see that the numerical results (with legend ``Opt. (Inter-point)'') is very close to the simulation results (with legend ``Sim. Opt. (Inter-point)'') although Proposition 1 is derived in interference-limited scenario.\footnote{For all the considered caching policies, the simulation results are overlapped with the numerical results, which are not shown for a clean figure.} Hence, in the sequel we only provided the numerical results. Comparing {``Opt. lower-bound''} with {``Opt. Inter-point''}, we can observe that the caching policy maximizing the lower bound of successful offloading probability performs almost the same  as the local  optimal solution. When the SINR threshold is high, e.g., $\gamma_0 > -5$ dB (i.e., $4$ Mbps rate for $10$ MHz bandwidth)  for $\delta = 0.5$, caching the most popular files everywhere can achieve almost the same performance as the optimized caching policies, which validates Corollary 2 although it is derived when $\gamma_0 \to \infty$ and $\lambda_u/\lambda_2 \to \infty$. Moreover, when $\delta$ increases, the gap between caching the most popular files everywhere and optimized caching policies shrinks. These results indicate  that for highly skewed demand with high rate requirement, e.g., video on demand service, simply caching the most popular files everywhere can achieve maximal successful offloading probability.

\begin{figure}[!htb]
	\centering			\vspace{-3mm}
	\includegraphics[width = 0.4\textwidth]{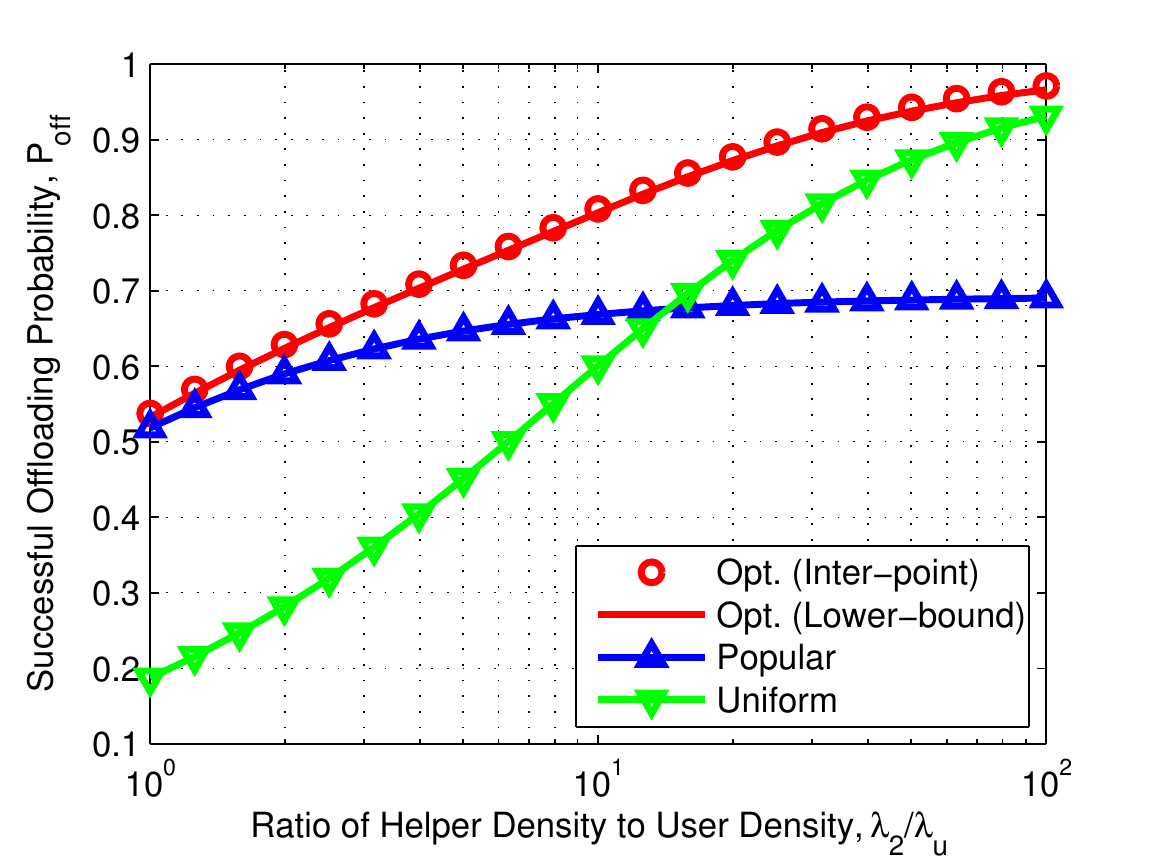}
			\vspace{-3mm}
	\caption{Impact of helper density on successful offloading probability, $\delta = 1$.}
	\label{fig:lambda}

\end{figure}

In Fig. \ref{fig:lambda}, we show the impact of helper density with given MBS and user densities. We can see that the gain of optimal caching over caching popular files everywhere shrinks when the $\lambda_2/\lambda_u$ decreases ($\lambda_2/\lambda_1$ decreases as well), which agrees with Corollary 1 and Corollary 3. When the helper density is high, the gain of optimal caching over uniform caching approaches to zero. This can be explained as follows. When $\lambda_2/\lambda_u$ is high, the SINR at the user associated with the helper tier is high since the helper is closer to the user meanwhile more helpers can be turned into idle mode leading to lower interference. As a result, increasing file diversity can improve successful offloading probability.

\begin{figure}[!htb]
	\centering		
	\includegraphics[width = 0.4\textwidth]{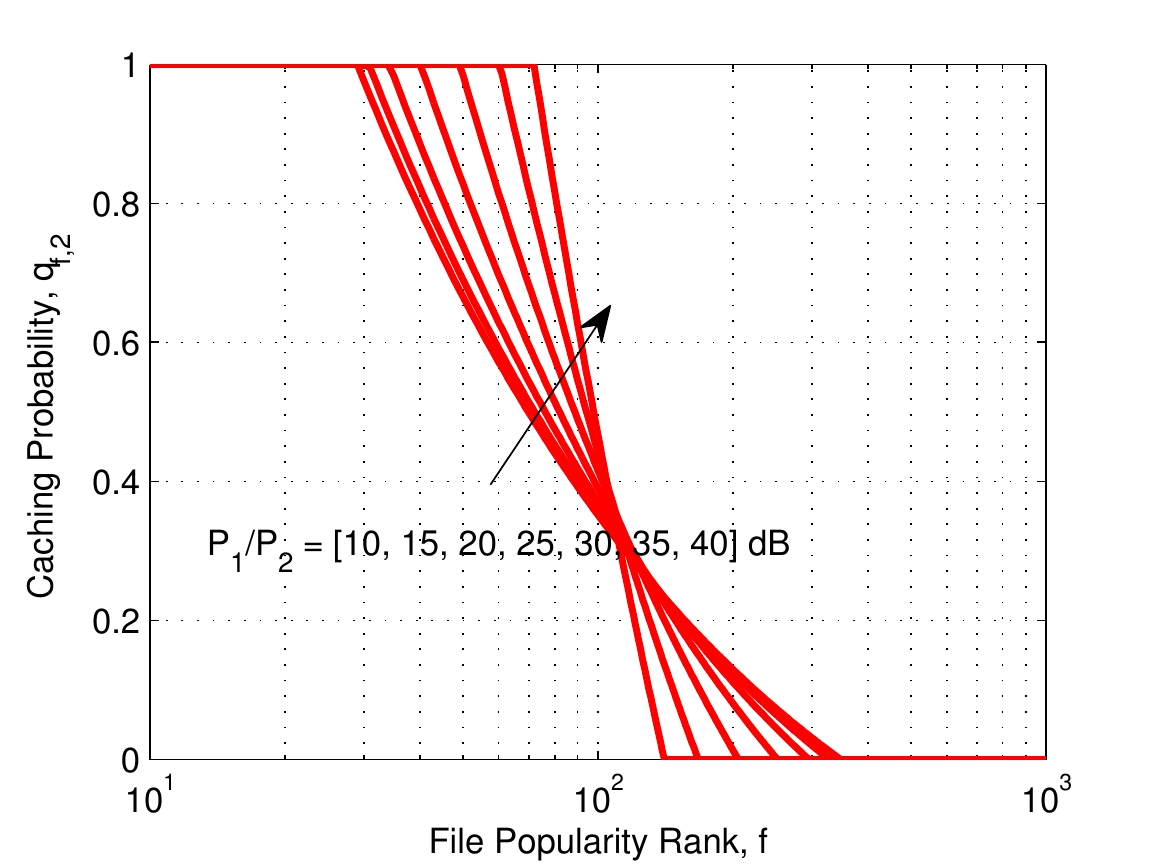}
	\vspace{-2mm}
	\caption{Impact of transmit power on optimal caching probabilities, $\delta = 1$.}
	\label{fig:P} 	
		
\end{figure}

In Fig. \ref{fig:P}, we show the impact of transmit power on the optimal caching policy. When $P_{1}/P_2$ increases, the files with higher popularity have more chances to be cached while the files with lower popularity have less chances to be cached, which agrees with Corollary 2. This is because the interference from the MBS increases with $P_1/P_2$ and caching the most popular file everywhere can increase the user's SINR, which leads to the increase of successful offloading probability.

\begin{figure}[!htb]
	\centering 		
	\includegraphics[width = 0.4\textwidth]{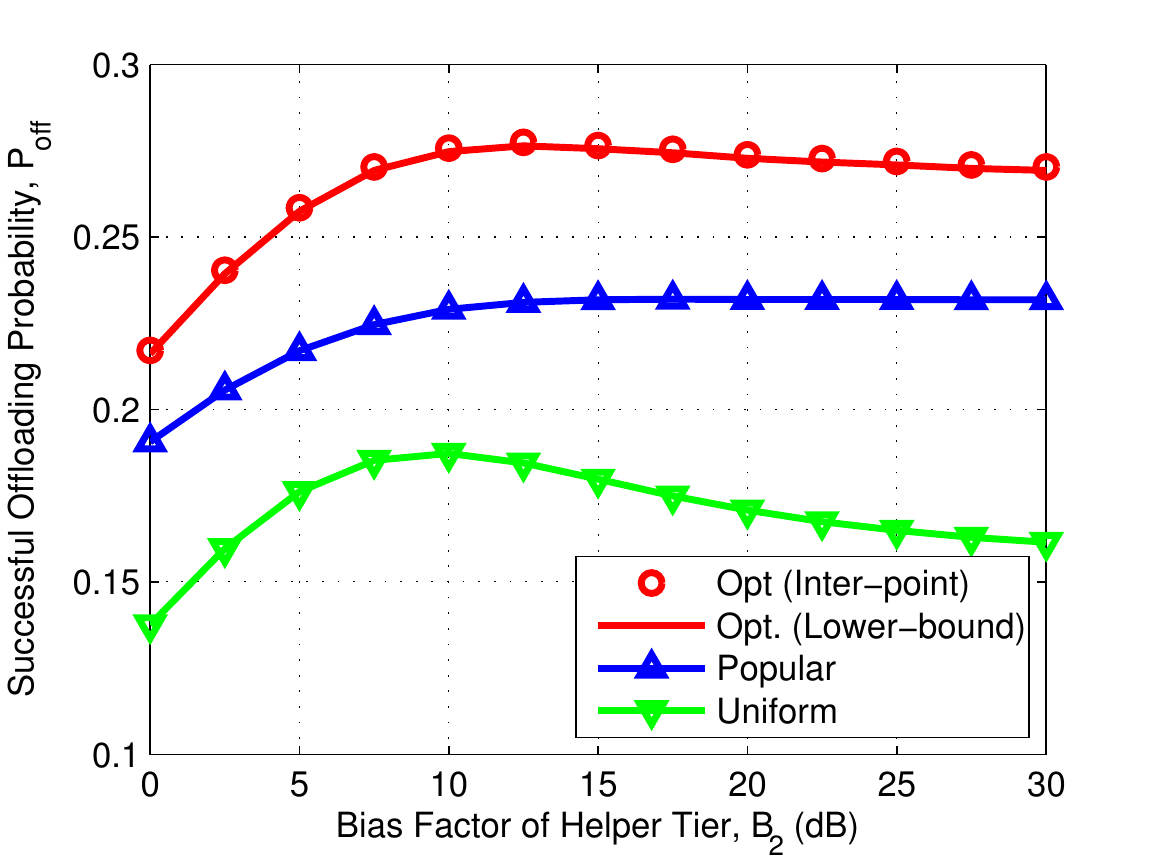}
				\vspace{-3mm}
	\caption{Impact of bias factor on successful offloading probability, $\delta = 0.5$.}
	\label{fig:B}
\end{figure}

In Fig. \ref{fig:B}, we show the impact of the bias factor. We can see that the optimized caching policies outperform the other two policies. When $B_2$ increases, the successful offloading probability first increases and then decreases (though slightly for ``Popular"). This is because the users are more likely to be offloaded to the helper tier when $B_2$ increases, and hence increases the successful offloading probability. Meanwhile, the distance between the user and its interfering MBS increases since the user prefers a far helper node to associate with than a near MBS, which reduces the SINR of user and may decrease the successful probability. 
\vspace{-1mm}
\section{Conclusion}
In this paper, we investigated optimal content placement in cache-enabled HetNets. We derived the closed-form expression of the successful offloading probability and obtained the optimal caching probability maximizing the successful offloading probability. We then analyzed the impact of BS density, user density, transmit power, and SINR threshold on the optimal caching policy. Simulation and numerical results validated our analysis and showed that when the ratios of MBS-to-helper density, MBS-to-helper transmit power, and user-to-helper density and the SINR threshold are large, the optimal caching policy tends to cache the most popular files everywhere. Besides, there exists optimal bias factor to maximize the successful offloading probability.
\appendices
\section{Proof of Proposition 1}
Based on the formula of total probability, the successful offloading probability defined in \eqref{eqn:def} can be derived as \vspace{-1mm}
\begin{equation}
P_{\rm off} = \sum_{f=1}^{N_f} p_f  \mathbb{P} (\gamma> \gamma_0 , k=2 | f) 
 =  \sum_{f=1}^{N_f} p_f \mathcal{P}_{f,2} \mathbb{P} (\gamma_{f,2} > \gamma_0)  \label{eqn:off}
\end{equation}
where $\mathbb{P}(\gamma > \gamma_0, k =2 | f)$ is the successful offloading probability conditioned on that the typical user requests the $f$th file, $\mathcal{P}_{f,k} = q_{f,k}\big(\sum_{j=1}^{2}q_{f,j}\lambda_{jk} (P_{jk}B_{jk})^{\frac{2}{\alpha}} \big)^{-1}$ \cite{flexible} is the probability that the typical user is associated with the $k$th tier when requesting the $f$th file, and $\mathbb{P}(\gamma_{f,k} > \gamma_0)$ is the successful transmission probability when the typical user requests the $f$th file and associates with the $k$th tier.

Based on the formula of total probability, we have
\begin{equation}
\mathbb P (\gamma_{f,k} > \gamma_0 ) = \int_{0}^{\infty}\mathbb{P} (\gamma_{f,k} > \gamma_0 | r) f_{r_k}(r) dr \label{eqn:suc3}
\end{equation}
where $f_{r_{k}}(r)  = \frac{2 \pi q_{f,k}\lambda_k}{\mathcal{P}_{f,k}} r  \exp (-\pi r^2 \sum_{j=1}^{2} q_{fj}\lambda_j ({P}_{jk}B_{jk})^{\frac{2}{\alpha}} )$ is the probability density function of the distance between user and its serving BS when requesting the $f$th file and associated with the $k$th tier, and $\mathbb P (\gamma_{f,k} > \gamma_0 | r )$ is the conditional successful transmission probability, which can be derived as,
\begin{align}
&\mathbb{P} (\gamma_{f,k} > \gamma_0 | r)
 \overset{(a)}{=} \mathbb{E}_{I_k} \left[  \mathbb{P}\left[ h_{k0} > M_k P_k^{-1} r^{\alpha} I_k \gamma_0 |r, I_k\right] \right] \nonumber \\
& \overset{(b)}{=} \mathbb{E}_{I_k} \left[\exp(-M_kP_k^{-1}r^\alpha I_k \gamma_0)\right] \nonumber\\
& \overset{(c)}{=} \prod_{j=1}^{2} \! \mathcal{L}_{I_{f,kj}}\! \left(M_kP_k^{-1}r^\alpha \gamma_0 \right)  \prod_{j=1}^{2} \mathcal{L}_{I_{f',kj}} \left(M_kP_k^{-1}r^\alpha \gamma_0 \right) \label{eqn:con_suc}
\end{align}
where step $(a)$ is from \eqref{eqn:gamma} by neglecting the thermal noise and using the law of total probability, step $(b)$ is from $h_{k0} \sim \exp(1)$, step $(c)$ follows because $\mathcal{L}_{\sum_{j}I_{f,kj}}(s)
=\prod_{i} \mathcal{L}_{I_{f, kj}}(s)$, and $\mathcal{L}(\cdot)$ denotes the Laplace
transform.

The Laplace transform of $I_{f,kj}$  can be derived as
\begin{align}
&   \mathcal{L}_{I_{f,kj}} (s) = \mathbb{E}_{\tilde \Phi_{f,j}, h_{ji}} \Big[e^{-s \sum_{i\in \tilde\Phi_{f,j} \backslash b_{k0}} P_j  h_{ji} r_{ji}^{-\alpha_j}}\Big] \nonumber \\
&  \overset{(a)}{=}  \mathbb{E}_{\tilde \Phi_{f,j}} \Big[\!\!\! \prod_{i\in \tilde\Phi_{f,j} \backslash b_{k0}}\!\!\! \!\!\!\left(1 + \tfrac{s P_j}{M_j}  r_{j,i}^{-\alpha_j}  \right)^{-M_j}\Big]  \nonumber \\
&   \overset{(b)}{=}\! e^{ -2\pi p_{{\rm a}, j} q_{f,j}\lambda_j \int_{r_{0j}}^{\infty} \big(1 - ({1 + \frac{s P_j}{M_j} u^{-\alpha_j}} )^{-M_j}\big)u du} \nonumber\\
&  = e^{ - \pi p_{{\rm a},j} q_{f,j} \lambda_j r_{0j}^2 \left(F_1 [ -\frac{2}{\alpha}, M_j; 1-\frac{2}{\alpha}; -\frac{sP_j}{M_j} r_{0j}^{-\alpha}] -1\right)  } \label{eqn:Laplace 1}
\end{align}
where step $(a)$ follows from $h_{ji}\sim \mathbb G(M_j,1/M_j)$, step $(b)$ comes from approximating the distribution of active BSs caching the $f$th file $\tilde{\Phi}_{f,j}$ as PPP with density $p_{a,j}q_{f,j}\lambda_j$ \cite{economy} and then using the probability generating function of the PPP, $p_{{\rm a}, j} \approx 1 - \big(1 + \frac{\mathcal{P}_k \lambda_u}{3.5\lambda_k}\big){}^{-3.5}$ is the probability that a randomly chosen BS in the $j$th tier is active \cite{offloading}, $\mathcal{P}_k = \sum_{f=1}^{N_f} p_f \mathcal{P}_{f,k} = \sum_{f=1}^{N_f} p_f  q_{f,k}\big(\sum_{j=1}^{2}q_{f,j}\lambda_{jk} (P_{jk}B_{jk})^{\frac{2}{\alpha}} \big)^{-1}$ is  the probability that the typical user is associated with the $k$th tier, and $r_{0j} = (P_{jk}B_{jk})^{\frac{1}{\alpha}} r$ is the closest possible distance of the interfering BS in $\tilde \Phi_{f,j}$.
 Substituting $r_{0j}$ and $s = M_k P_k^{-1} r^\alpha \gamma_0$ into \eqref{eqn:Laplace 1}, we obtain
\begin{equation}
\mathcal{L}_{f,kj}(M_k P_k^{-1} r^\alpha \gamma_0)
 = e^{  - \pi p_{{\rm a},j} q_{f,j}\lambda_j  (P_{jk}B_{jk})^{\frac{2}{\alpha}} r^2 \mathcal{Z}_{f,kj} (\gamma_0) } \hspace{-2.5mm}\label{eqn:L1}
\end{equation}
where $ \mathcal{Z}_{f,kj} (\gamma_0)\triangleq {}_{2}F_1 \big[ -\frac{2}{\alpha}, M_j; 1-\frac{2}{\alpha}; -\frac{\gamma_0}{M_{jk}B_{jk}} \big] -1$.

Since the BSs not caching the $f$th file, i.e. $\tilde{\Phi}_{f',j}$, can be arbitrarily close to the user, i.e., $r_{0j} \to 0$, from $\lim\limits_{r_{0j} \to 0}  r_{0j}^2(_2F_1 \big[-\frac{2}{\alpha}, M_j; 1-\frac{2}{\alpha}; -\frac{sP_j}{M_j} r_{0j}^{-\alpha}\big] -1 ) = \Gamma (1-\frac{2}{\alpha}) \Gamma(M_j + \frac{2}{\alpha_j}) \Gamma(M_j)^{-1}(\frac{sP_j}{M_j})^{\frac{2}{\alpha}}$, similar to the derivation of \eqref{eqn:Laplace 1}, we can obtain
\begin{equation}
\mathcal{L}_{f',kj}(M_k P_k^{-1} r^\alpha \gamma_0)= e^{-\pi p_{{\rm a},j} (1-q_{f,j})\lambda_j P_{jk}^{\frac{2}{\alpha}} r^2 \mathcal{Z}_{f',kj}(\gamma_0) } \hspace{-1mm} \label{eqn:L2}
\end{equation}
where $\mathcal{Z}_{f',kj}(\gamma_0) \triangleq \Gamma \left(1-\frac{2}{\alpha}\right)  \Gamma\left(M_j + \frac{2}{\alpha}\right) \Gamma(M_j)^{-1}(\frac{\gamma_0}{M_{jk}})^{\frac{2}{\alpha}}$.

By substituting \eqref{eqn:L1} and \eqref{eqn:L2} into \eqref{eqn:con_suc} and then \eqref{eqn:con_suc} into \eqref{eqn:suc3}, we obtain
	\begin{multline}
	\mathbb{P} ( \gamma_{f,k} > \gamma_0 )  = \frac{q_{f,k}}{\mathcal{P}_{f,k}}  2\bigg(\sum_{j=1}^{2}  \lambda_{jk} P_{jk}^{\frac{2}{\alpha}} \Big(p_{{\rm a},j} \mathcal{Z}_{f',jk} (\gamma_0)+q_{fj}   \\
	   \times p_{{\rm a},j}\Big( B_{jk}^{\frac{2}{\alpha}} \mathcal{Z}_{f,jk}(\gamma_0) -  \mathcal{Z}_{f',jk}(\gamma_0)\Big)  +  q_{fj}   B_{jk}^{\frac{2}{\alpha}}    \Big) \bigg)^{-1}\label{eqn:suc}
	\end{multline}
Substituting $k=2$, $q_{f,1} = 1$, $p_{{\rm a}, 1} = 1$ (since $\lambda_u \gg \lambda_1$) into \eqref{eqn:suc} and then into \eqref{eqn:off}, Proposition 1 can be proved.

\section{Proof of Corollary 1}
Without loss of generality, we assume $q_{1,2}^*, \dots, q_{N_1,2}^* = 1$ and $q_{N_f-N_0+1,2}^*, \dots, q_{N_f,2}^* = 0$. From $\sum_{f=1}^{N_f} q_{f,2}^* = N_c$ and by defining $c_1 \triangleq \mathcal{C}_{1, \gamma_0} + \mathcal{C}_{2, \gamma_0}$ and $c_2 \triangleq C_{3,\gamma_0} + 1$, we have 
$
\sum_{f=N_0 + 1}^{N_f-N_1} \left( \frac{1}{c_2} \sqrt{\frac{c_1}{\nu}} \sqrt{p_f} - \frac{c_1}{c_2}\right) + N_1 = N_c
$,
from which we can obtain
\begin{equation}
 \frac{1}{c_2}\sqrt{\frac{c_1}{\nu}}  = \frac{N_c - N_1 + (N_f - N_0 - N_1)\frac{c_1}{c_2}}{ \sum_{f=N_0 + 1}^{N_f-N_1} \sqrt{p_f}} \label{eqn:k}
\end{equation}
We can see that $ \frac{1}{c_2}\sqrt{\frac{c_1}{\nu}}$ increases with $c_1$ since $c_2 \geq 0$.\footnote{It can be proved that $-1 \leq \mathcal{C}_{3,\gamma_0} \leq 0$ for $\gamma_0 \in [0,\infty)$, and hence $0 \leq c_2 \leq 1$.} Considering that $c_1$ increases with but $c_2$ does not depend on $\lambda_{12}$ and $P_{12}$, $\frac{1}{c_2}\sqrt{\frac{c_1}{\nu}}$ increases with $\lambda_{12}$ and $P_{12}$.

For any $q_{f,2}^*$, $q_{f+1,2}^* \in (0,1)$, from \eqref{eqn:opt}, we have
\begin{equation}
 q_{f,2}^* - q_{f+1,2}^* =  \frac{1}{c_2}\sqrt{\frac{c_1}{\nu}} (p_f - p_{f+1}) \label{eqn:diff}
\end{equation}
Since $p_f > p_{f+1}$, $q_{f,2}^*-q_{f+1,2}^* $ increases with $\frac{1}{c_2}\sqrt{\frac{c_1}{\nu}}$ and hence increases with $\lambda_{12}$ and $P_{12}$.

\section{Proof of Corollary 2}
By using the transformation of  ${}_2 F_1 [a,b;c;z]  $ \cite[eq. (9.132)]{jeffrey2007table}, and considering the series-form expression of ${}_2F_1 [a,b;c;z] = \sum_{n = 0}^{\infty} \frac{(a)_n 	(b)_n}{(c)_n} z^n$ where $(x)_n \triangleq x(x+1) \cdots (x + n - 1)$
denotes the rising Pochhammer symbol, we can obtain the asymptotic result of $ {}_{2}F_1  \big[ -\frac{2}{\alpha}, M_2; 1-\frac{2}{\alpha}; -\gamma_0 \big]$ for $\gamma_0 \to \infty$ as $\Gamma (1-\frac{2}{\alpha}) \Gamma(M_2 + \frac{2}{\alpha_2}) \Gamma(M_2)^{-1}\gamma_0{}^{\frac{2}{\alpha}} $ which equals $\mathcal{C}_{2,\gamma_0}$. Then, considering the definition of $\mathcal{C}_{3, \gamma_0}$, we have $\mathcal{C}_{3,\gamma_0} \triangleq {}_{2}F_1  \big[ -\frac{2}{\alpha}, M_2; 1-\frac{2}{\alpha}; -\gamma_0 \big] - \mathcal C_{2,\gamma_0}- 1 = -1$ for $\gamma_0 \to \infty$. Upon substituting $\mathcal{C}_{3,\gamma_0} = -1$ into \eqref{eqn:case1}, we obtain $P_{\rm off}^{\infty} = \frac{1}{\mathcal{C}_1(\gamma_0) + \mathcal{Z}_{f',22}}\sum_{f=1}^{N_f} p_fq_{f,2}$ for $\gamma_0 \to \infty$. Since $p_f$ decreases with $f$ and further considering constraint \eqref{eqn:con1} and \eqref{eqn:con2}, it is easy to see that the optimal values of $q_{f,2}$ maximizing $\sum_{f = 1}^{N_f} p_f q_{f,2}$ are $q_{1,2}^*, \cdots, q_{N_c,2}^* = 1$ and $q_{N_c+1,2}^*, \cdots, q_{N_f, 2}^* = 0$, and hence Corollary 2 is proved.

\section{Proof of Corollary 3}
Since \eqref{eqn:optlower} has the same structure as \eqref{eqn:opt}, similar to the proof in Appendix B, we can assume $\ubar q_{1,2}^*, \dots, \ubar q_{N_1,2}^*  = 1$ and $\ubar q_{N_f-N_0+1,2}^*, \dots, \ubar q_{N_f,2}^*   =0$ without loss of generality.

By defining $\bar c_1 \triangleq  \mathcal{C}_{1,\gamma_0}  + \bar{p}_{\rm a, 2}\mathcal{C}_{2,\gamma_0}$ and $\bar c_2 \triangleq \bar p_{\rm a, 2}C_{3,\gamma_0} + 1$, similar to the derivation of \eqref{eqn:k}, we can obtain
$
\frac{1}{\bar c_2}\big(\frac{\bar c_1}{\nu}\big){}^{\frac{1}{2}}  = \big({N_c - N_1 + (N_f - N_0 - N_1)\frac{\bar c_1}{\bar c_2}}\big){ / \big(\sum_{f=N_0 + 1}^{N_f-N_1} \sqrt{p_f}\big)} \label{eqn:}
$.
Since  $\mathcal{C}_{2,\gamma_0} \geq 0$ and $\mathcal{C}_{3,\gamma_0} \leq 0$, ${\bar c_1}/{\bar c_2}$ increases with $\bar p_{{\rm a}, 2}$ and hence $\frac{1}{\bar c_2}\big(\frac{\bar c_1}{\nu}\big){}^{\frac{1}{2}}$ increases with $\bar p_{{\rm a}, 2}$. Further considering that $p_{{\rm a},2}$ increases with $\lambda_u/\lambda_2$, we know that $\frac{1}{\bar c_2}\big(\frac{\bar c_1}{\nu}\big){}^{\frac{1}{2}}$ increases with  $\lambda_u/\lambda_2$. Then, with the similar way to derive \eqref{eqn:diff}, Corollary 3 can be proved.

\bibliographystyle{IEEEtran}
\bibliography{dongbib}

\begin{thebibliography}{10}
\providecommand{\url}[1]{#1}
\csname url@samestyle\endcsname
\providecommand{\newblock}{\relax}
\providecommand{\bibinfo}[2]{#2}
\providecommand{\BIBentrySTDinterwordspacing}{\spaceskip=0pt\relax}
\providecommand{\BIBentryALTinterwordstretchfactor}{4}
\providecommand{\BIBentryALTinterwordspacing}{\spaceskip=\fontdimen2\font plus
\BIBentryALTinterwordstretchfactor\fontdimen3\font minus
  \fontdimen4\font\relax}
\providecommand{\BIBforeignlanguage}[2]{{%
\expandafter\ifx\csname l@#1\endcsname\relax
\typeout{** WARNING: IEEEtran.bst: No hyphenation pattern has been}%
\typeout{** loaded for the language `#1'. Using the pattern for}%
\typeout{** the default language instead.}%
\else
\language=\csname l@#1\endcsname
\fi
#2}}
\providecommand{\BIBdecl}{\relax}
\BIBdecl

\bibitem{Niki13}
N.~Golrezaei, A.~F. Molisch, A.~G. Dimakis, and G.~Caire, ``Femtocaching and
  device-to-device collaboration: A new architecture for wireless video
  distribution,'' \emph{IEEE Commun. Mag.}, vol.~51, no.~4, pp. 142--149, 2013.

\bibitem{Dong}
D.~Liu and C.~Yang, ``Energy efficiency of downlink networks with caching at
  base stations,'' \emph{IEEE J. Sel. Areas Commun.}, early access, 2016.

\bibitem{song2015optimal}
J.~Song, H.~Song, and W.~Choi, ``Optimal caching placement of caching system
  with helpers,'' in \emph{proc. IEEE ICC}, 2015.

\bibitem{Blaszczyszyn2015optimal}
B.~Blaszczyszyn and A.~Giovanidis, ``Optimal geographic caching in cellular
  networks,'' in \emph{proc. IEEE ICC}, 2015.

\bibitem{cui2015analysis}
Y.~Cui, Y.~Wu, and D.~Jiang, ``Analysis and optimization of caching and
  multicasting in large-scale cache-enabled information-centric networks,'' in
  \emph{proc. IEEE GLOBECOM}, 2015.

\bibitem{rao2015optimal}
J.~Rao, H.~Feng, C.~Yang, Z.~Chen, and B.~Xia, ``Optimal caching placement for
  {D2D} assisted wireless caching networks,'' in \emph{proc. IEEE ICC}, 2016.

\bibitem{breslau1999web}
L.~Breslau, P.~Cao, L.~Fan, G.~Phillips, and S.~Shenker, ``Web caching and
  {Zipf}-like distributions: Evidence and implications,'' in \emph{Proc. IEEE
  INFOCOM}, 1999.

\bibitem{adhoc}
N.~Jindal, J.~Andrews, and S.~Weber, ``Multi-antenna communication in ad hoc
  networks: Achieving {MIMO} gains with {SIMO} transmission,'' \emph{IEEE
  Trans. Commun.}, vol.~59, no.~2, pp. 529--540, Feb. 2011.

\bibitem{flexible}
H.-S. Jo, Y.~J. Sang, P.~Xia, and J.~Andrews, ``Heterogeneous cellular networks
  with flexible cell association: A comprehensive downlink {SINR} analysis,''
  \emph{IEEE Trans. Wireless Commun.}, vol.~11, no.~10, pp. 3484--3495, Oct.
  2012.

\bibitem{boyd2004convex}
S.~Boyd and L.~Vandenberghe, \emph{Convex optimization}.\hskip 1em plus 0.5em
  minus 0.4em\relax Cambridge university press, 2004.

\bibitem{3GPP}
{TR 36.814 V1.2.0}, ``Further advancements for {E-UTRA} physical layer aspects
  (release 9),'' \emph{3GPP}, Jun. 2009.

\bibitem{economy}
S.~Lee and K.~Huang, ``Coverage and economy of cellular networks with many base
  stations,'' \emph{IEEE Commun. Lett.}, vol.~16, no.~7, pp. 1038--1040, July
  2012.

\bibitem{offloading}
S.~Singh, H.~Dhillon, and J.~Andrews, ``Offloading in heterogeneous networks:
  Modeling, analysis, and design insights,'' \emph{IEEE Trans. Wireless
  Commun.}, vol.~12, no.~5, pp. 2484--2497, May 2013.

\bibitem{jeffrey2007table}
A.~Jeffrey and D.~Zwillinger, \emph{Table of integrals, series, and products},
  6th~ed.\hskip 1em plus 0.5em minus 0.4em\relax Academic Press, 2000.

\end{thebibliography}

\end{document}